  \providecommand\BibTeX{{%
    \normalfont B\kern-0.5em{\scshape i\kern-0.25em b}\kern-0.8em\TeX}}}
\newtheorem{theorem}{Theorem}
\begin{document}

%%
%% The "title" command has an optional parameter,
%% allowing the author to define a "short title" to be used in page headers.
\title{STSIR: Spatial Temporal Pandemic Model with Mobility Data }
\subtitle{A COVID-19 Case Study}
%%
%% The "author" command and its associated commands are used to define
%% the authors and their affiliations.
%% Of note is the shared affiliation of the first two authors, and the
%% "authornote" and "authornotemark" commands
%% used to denote shared contribution to the research.
\author{Wang Pan}
\affiliation{%
  \institution{Tsinghua-Berkeley Shenzhen Institute, Tsinghua University}
  \institution{Pengcheng Laborotory}
  \country{China}}  
\email{pw18@mails.tsinghua.edu.cn}

\author{Qipu Deng}
\affiliation{%
  \institution{Tsinghua-Berkeley Shenzhen Institute, Tsinghua University}
  \institution{Pengcheng Laborotory}
  \country{China}}  
\email{dqp18@mails.tsinghua.edu.cn}

\author{Jiadong Li}
\affiliation{%
  \institution{Tsinghua-Berkeley Shenzhen Institute, Tsinghua University}
  \institution{Pengcheng Laborotory}
  \country{China}}  
\email{lijd18@mails.tsinghua.edu.cn}

\author{Zhi Wang}
\affiliation{%
  \institution{Department of Computer Science and Technology, Tsinghua University}
  \institution{Pengcheng Laborotory}
  \country{China}}
\email{wangzhi@sz.tsinghua.edu.cn}

\author{Wenwu Zhu}
\affiliation{%
  \institution{Department of Computer Science and Technology, Tsinghua University}
  \institution{Pengcheng Laborotory}
  \country{China}}
\email{wwzhu@tsinghua.edu.cn}

%%
%% By default, the full list of authors will be used in the page
%% headers. Often, this list is too long, and will overlap
%% other information printed in the page headers. This command allows
%% the author to define a more concise list
%% of authors' names for this purpose.
%\renewcommand{\shortauthors}{Trovato and Tobin, et al.}

%%
%% The abstract is a short summary of the work to be presented in the
%% article.
\begin{abstract}
With the outbreak of COVID-19, how to mitigate and suppress its spread is a big issue to the government. Department of public health need powerful models to model and predict the trend and scale of such pandemic. And models that could evaluate the effect of the public policy are also essential to the fight with the COVID-19.\\
A main limitation of existing models is that they can only evaluate the policy by calculating $R_0$ after infection happens instead of giving observable index. To tackle this, based on the transmission character of the COVID-19, we preposed a novel framework Spatial-Temporal-Susceptible-Infected-Removed (STSIR) model. In particular, we merged both intra-city and inter-city mobility index with the traditional SIR dynamics and make it a dynamic system. And we proved that the STSIR system is a closed system which makes the system self-consistent. And finally we proposed a Multi-Stage Simulated Annealing (MSSA) algorithm to find optimal parameter of the system. In our experiments, based on Baidu Mobility dataset \cite{baidu20}, and China pandemic dataset provided by Dingxiangyuan\cite{dxy}, our model can effectively predict the total scale of the pandemic and also gives clear policy analysis with observable index.
\end{abstract}

%%
%% The code below is generated by the tool at http://dl.acm.org/ccs.cfm.
%% Please copy and paste the code instead of the example below.
%%
%\begin{CCSXML}
%<ccs2012>
% <concept>
%  <concept_id>10010520.10010553.10010562</concept_id>
%  <concept_desc>Computer systems organization~Embedded systems</concept_desc>
%  <concept_significance>500</concept_significance>
% </concept>
% <concept>
%  <concept_id>10010520.10010575.10010755</concept_id>
%  <concept_desc>Computer systems organization~Redundancy</concept_desc>
%  <concept_significance>300</concept_significance>
% </concept>
% <concept>
%  <concept_id>10010520.10010553.10010554</concept_id>
%  <concept_desc>Computer systems organization~Robotics</concept_desc>
%  <concept_significance>100</concept_significance>
% </concept>
% <concept>
%  <concept_id>10003033.10003083.10003095</concept_id>
%  <concept_desc>Networks~Network reliability</concept_desc>
%  <concept_significance>100</concept_significance>
% </concept>
%</ccs2012>
%\end{CCSXML}
%
%\ccsdesc[500]{Computer systems organization~Embedded systems}
%\ccsdesc[300]{Computer systems organization~Redundancy}
%\ccsdesc{Computer systems organization~Robotics}
%\ccsdesc[100]{Networks~Network reliability}

%%
%% Keywords. The author(s) should pick words that accurately describe
%% the work being presented. Separate the keywords with commas.
\keywords{pandemic, dynamic model, spatial-temporal model}
\bibliographystyle{unsrt}
%% A "teaser" image appears between the author and affiliation
%% information and the body of the document, and typically spans the
%% page.
%\begin{teaserfigure}
%  \includegraphics[width=\textwidth]{sampleteaser}
%  \caption{Seattle Mariners at Spring Training, 2010.}
%  \Description{Enjoying the baseball game from the third-base
%  seats. Ichiro Suzuki preparing to bat.}
%  \label{fig:teaser}
%\end{teaserfigure}

%%
%% This command processes the author and affiliation and title
%% information and builds the first part of the formatted document.
\maketitle
\section{Introduction}
Building accurate models to predict and quantify the effect of infectious diseases is very important to today's public health. The classic SI, SIR and SEIR models \cite{kermack1927contribution} have been proposed and used for a long time. However, with the development of modern transportation, the range and the effect of infectious disease are widely amplified, which traditional SI, SIR and SEIR models become insufficient to describe the transmission characteristics of new infectious diseases accurately. For example, the outbreak of the new COVID-19 pandemic has caused nearly 20 million infections and hundreds of thousands of deaths nowadays. We have not expected so much loss that such disease can cause, which urge us to build better infectious disease models to analyze and monitor the pandemic's development. And we can expect to rely on such tools to help us understand the epidemic transmission characters and make correct and reasonable epidemic prevention policies to save lives.\\
In order to better fit and analyze the spread of such infectious diseases, various scholars have conducted different new methods. On one hand, several scholars add new transiting phases (eg. Asymptomatic, Hospitalized, etc.) in infectious models\cite{Gatto10484,Zhou2020medxivphase,Li2020.02.14.20022913,Wang2020Cell}. However, the method of simply adding phases has several shortcomings. Firstly, the abundant parameters make parameter calculation very laborious. Secondly, after adding multiple phases, the model focus on only simulates a single city or region, and cannot model the flow of people between multiple regions. The transmission characters between regions have been ignored. Thirdly, with the different definition of different phases, properly categorizing patients to those phases is difficult.\\
To make the model be able to model multiple regions, other scholars have added transportation statistics such as civil aviation and railways data to model pandemic evolution\cite{Zhou2020medxivphase,Ng2020NEJMTraveler,lau2020association,wu2020nowcasting,Liu2020.03.17.20037770}. Nevertheless, the information granularity of civil aviation and railway statistics are too coarse to give the intercity information accurately, such modeling can only give a rough pandemic trend, with low predicting accuracy. Such models considered the inter-region transmission from the origin of the pandemic outbreak to other major cities. But with the development of modern transportation. New situations such that infected people may travel not only from the outbreak city to the first destination but also from the first destination to the second one are also crucial for the trend analysis, which has not been considered by the current models. Furthermore, the basic reproduction number $R_0$ is essential in modeling. However, such a parameter can only be calculated based on infected numbers after infection happens. The model can only analyze the effect of our quarantine or intervention policy by heuristically imagining how low the policy can suppress the $R_0$ to without giving any quantifiable index that we can monitoring\cite{Wang2020Cell,wu2020nowcasting}. A systematic fine-grained spatial-temporal model that can better describe the trend of the pandemic and give quantifiable indexes to make policies aim to mitigate pandemic spread is desperately needed. \\
\begin{figure*}[h]
  \centering
  \includegraphics[width=\textwidth]{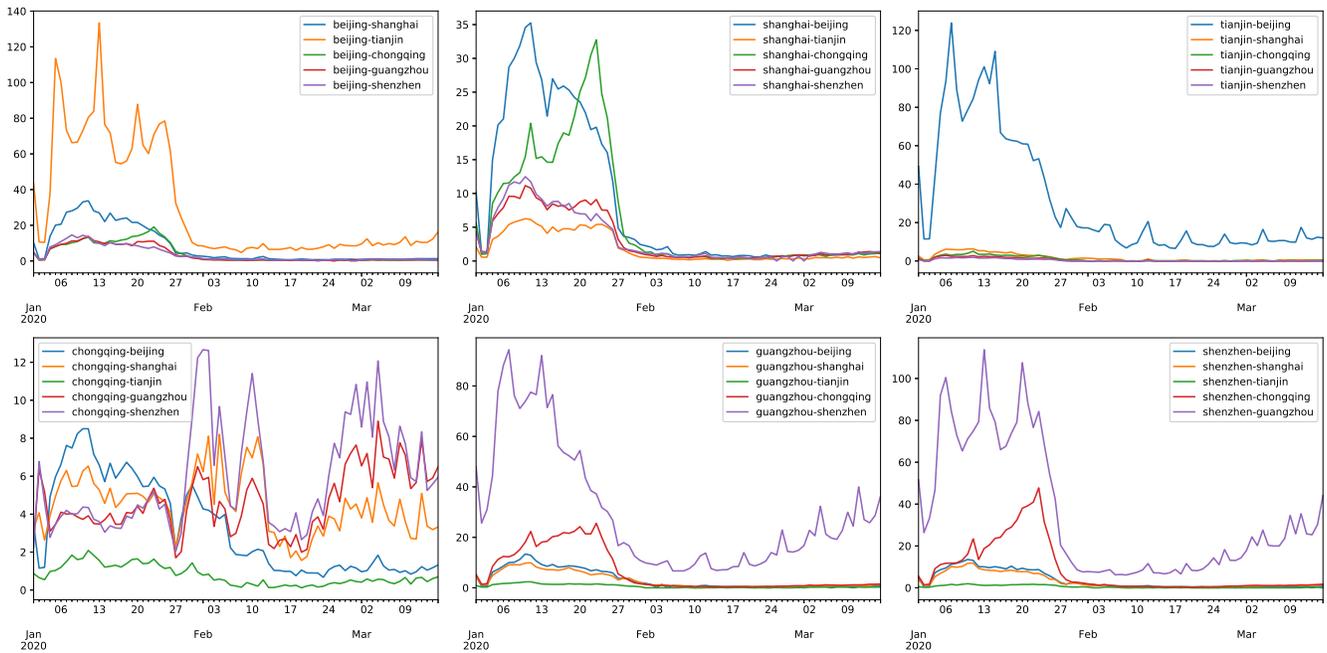}
  \caption{Inter-city mobility index between the major cities in China}
  \Description{inter-city-all index}
  \label{fig:intercity-all}
\end{figure*}

\begin{figure}[h]
  \centering
  \includegraphics[width=\linewidth]{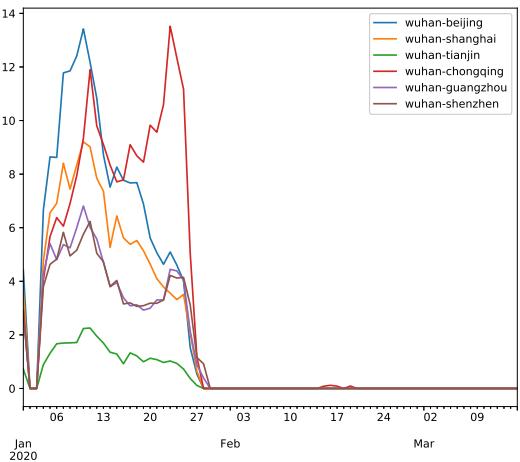}
  \caption{Inter-city mobility index between the outbreak origin Wuhan and other major cities in China}
  \Description{inter-city index}
  \label{fig:intercity}
\end{figure}

To achieve this target, combining mobility data with classical infectious models is an insightful idea. With the prevalence of mobile phones and mobile apps, mobility data provided by large companies gives an accurate and timely overview of the migration of people\cite{Ting2020Tech}. The near real-time and fine-grained data are undoubtedly helpful in building new pandemic modeling frameworks\cite{Aggregated,Ienca2020Digital}. Multiple researches have shown that non-pharmaceutical interventions have obvious effects to mobility activities\cite{arnal2020private,ScienceDataItaly}. For example, Baidu and Tencent provide an apparent migration curve of people between regions\cite{baidu20}, through the data collected from various apps (Figure\ref{fig:intercity}). Moreover, due to the accuracy of positioning and ITS development, the trajectory information generated by people when they travel within the city also allows us to obtain a description of the degree of intra-city activities.(Figure \ref{fig:intracity}) During the pandemic, compared with the fixed and extensive civil aviation railway statistics, mobility data gave a more fine-grained transportation overview from inter-region and intra-region perspectives.\\

\begin{figure}[h]
  \centering
  \includegraphics[width=\linewidth]{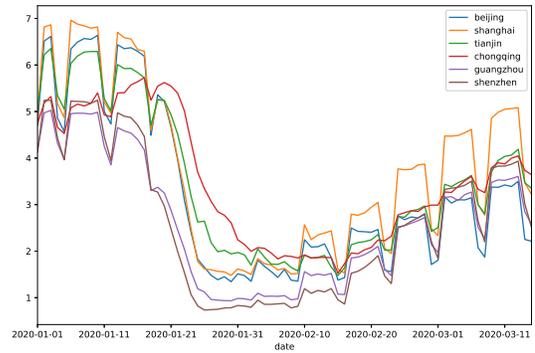}
  \caption{Intra-city mobility index of major cities in China}
  \Description{intra-city index}
  \label{fig:intracity}
\end{figure}

With the considerations list above, we proposed the STSIR framework. A succinct, accurate, quantifiable framework that gives pandemic trend prediction and quantifiable indexes could help mitigate the pandemic spread. Generally, the STSIR model contains a system of differential equations in which we have designed two novel parts to model the spread of infectious diseases intra-regionally and inter-regionally. We set one day as a time segment. We only need the initial information when the outbreak started to give a pandemic evolving curve by fusing the intra-city activity index and the inter-city activity index.

\section{Related works}
With the outbreak of the COVID-19, numerous studies related to such infectious disease are emerged. New models adopted new transmission character of the COVID-19 in order to get better model performance\\
Initially, Wu et al.\cite{wu2020nowcasting} integrated inter-regional traffic originated from Wuhan with SEIR model to catch the pandemic's transmission character. However, inter-regional transmission between other big cities has been ignored. Such a model has not considered latent infection transmission between big cities except the initial outbreak point. Followed by this research, Wang et al. \cite{Wang2020Cell} tried to give an analysis of the pandemic by intuitively adjusting the $R_0$. Such a method gives an overview of the pandemic with different $R_0$s. However, no observable indexes related to $R_0$ are mentioned in such a model, which gives difficulties in evaluating our policy's effect. Furthermore, Gatto et al. \cite{Gatto10484} and Zhou \cite{Zhou2020medxivphase} added a new asymptomatic phase into the pandemic evolving procedure. Gatto et al. \cite{Gatto10484} also added pre-symptomatic and hospitalized phases to give a more accurate dynamic of COVID-19. Nevertheless, with the different definitions of the different phases, accurately categorizing patients into phases is quite difficult. Lau et al.'s work\cite{lau2020association} tried to use air traffic statistics to model transmission traits. However, the granularity of the statistic data limits the model performance. Ting et al. \cite{Ting2020Tech}, Ienca et al.\cite{Ienca2020Digital} and Buckee et al.\cite{Aggregated} pointed out the importance of the mobility data in fighting COVID-19. With such insight, combined with mobility data, Jia et al. \cite{Jia2020NatureST} gave out a model that could help analyze the risk distribution of the pandemics. Gatto et al. \cite{Gatto10484} and Wu et al. \cite{wu2020nowcasting} also enhance their works with inter-regional mobility data. Nevertheless, the mobility activities within each region are not considered thoroughly.\\

%\textbf{Adding Intercity Stats} Specifically, Wu et al's work gave an inter-regional SEIR model that considered inter-city traffic from original outbreak city Wuhan and some other big cities in China. But the inter-regional transportation that between those big cities have not been considered. 
\section{Proposed Model}
\subsection{Preliminaries}
In this subsection, we present the fundamental concepts about basic epidemiology models. We start with pure SI and SIR models. Moreover, we analyzed its usefulness in today's COVID-19 cases. As the analysis goes on, we derive our new STSIR framework from the basic models.\\
\textbf{Susceptible-Infected (SI) model.}\quad The most basic epidemiology model is Susceptible-Infected(SI) model. As shown in figure \ref{fig-SI}, there are two transiting phases in the dynamic model.

\begin{figure}[h]
  \centering
  \includegraphics[width=0.5\linewidth]{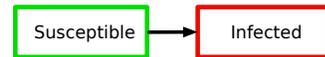}
  \caption{The transiting phases of SI model}
  \Description{SI framework}
  \label{fig-SI}
\end{figure}
And the whole dynamics will be running based on the following equations:

\begin{numcases}{}
	\frac{\partial S}{\partial t}=-\frac{R_{0}IS}{D_{E}N}\\
	\frac{\partial I}{\partial t}=\frac{R_{0}IS}{D_{E}N}
\end{numcases}
with initial condition:
\begin{numcases}{}
	S(0)=N-I_0\\
	I(0)=I_0
\end{numcases}

\textbf{Susceptible-Infected-Removed (SIR) model.}\quad A transformation of the SI model is Susceptible-Infected-Removed (SIR) model. As shown in the figure \ref{fig-SIR} there are three transiting phases in the dynamic model. A new \textbf{Removed} phase is included to describe the people who are recovered, dead, or quarantined, which do not affect the infecting procedure. 

\begin{figure}[h]
  \centering
  \includegraphics[width=0.8\linewidth]{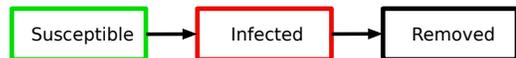}
  \caption{The transiting phases of SIR model}
  \Description{SIR framework}
  \label{fig-SIR}
\end{figure}
And the whole dynamics will be running based on the following equations:
\begin{numcases}{}
	\frac{\partial S}{\partial t}=-\frac{R_{0}IS}{D_{E}N}\\
	\frac{\partial I}{\partial t}=\frac{R_{0}IS}{D_{E}N}-\gamma_{1}I\\
	\frac{\partial R}{\partial t}=\gamma_{1}I
\end{numcases}
with initial condition:
\begin{numcases}{}
	S(0)=N-I_0\\
	I(0)=I_0\\
	R(0)=0
\end{numcases} 
\textbf{Other model with new phases}\quad During this COVID-19 pandemic, clinical reports \cite{Li2020NEJMEarly} have shown that people may show no symptoms or just mild symptoms after they get infected. However, such people also get the ability to infect others. This kind of character makes this virus can be easily spread widely. As shown in figure \ref{fig-SIRBig}, numerous scholars added new phases to enrich the dynamic model. \\
\begin{figure}[h]
  \centering
  \includegraphics[width=\linewidth]{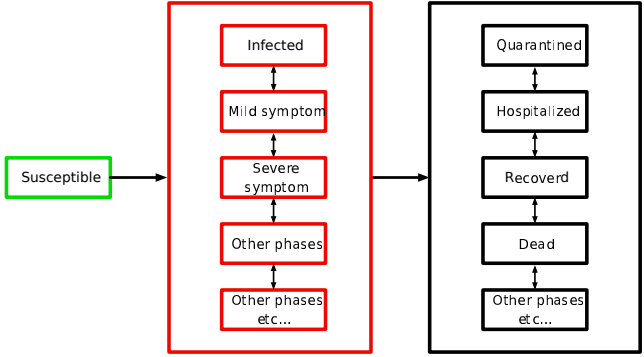}
  \caption{The transiting phases of SIR+ model}
  \Description{SIR+ framework}
  \label{fig-SIRBig}
\end{figure}
Although these models increased the granularity of the transiting phases due to the implicit and different definitions of each phase, clearly categorized patients to the model phases are tricky. To get a clear definition of each phase and a succinct model, we choose to adopt the SIR model as our primary model and focus more on the pandemic's transmission character. 
\subsection{Spatial-Temporal-Susceptible-Infected-Removed (STSIR) Model}

\begin{figure}[h]
  \centering
  \includegraphics[width=\linewidth]{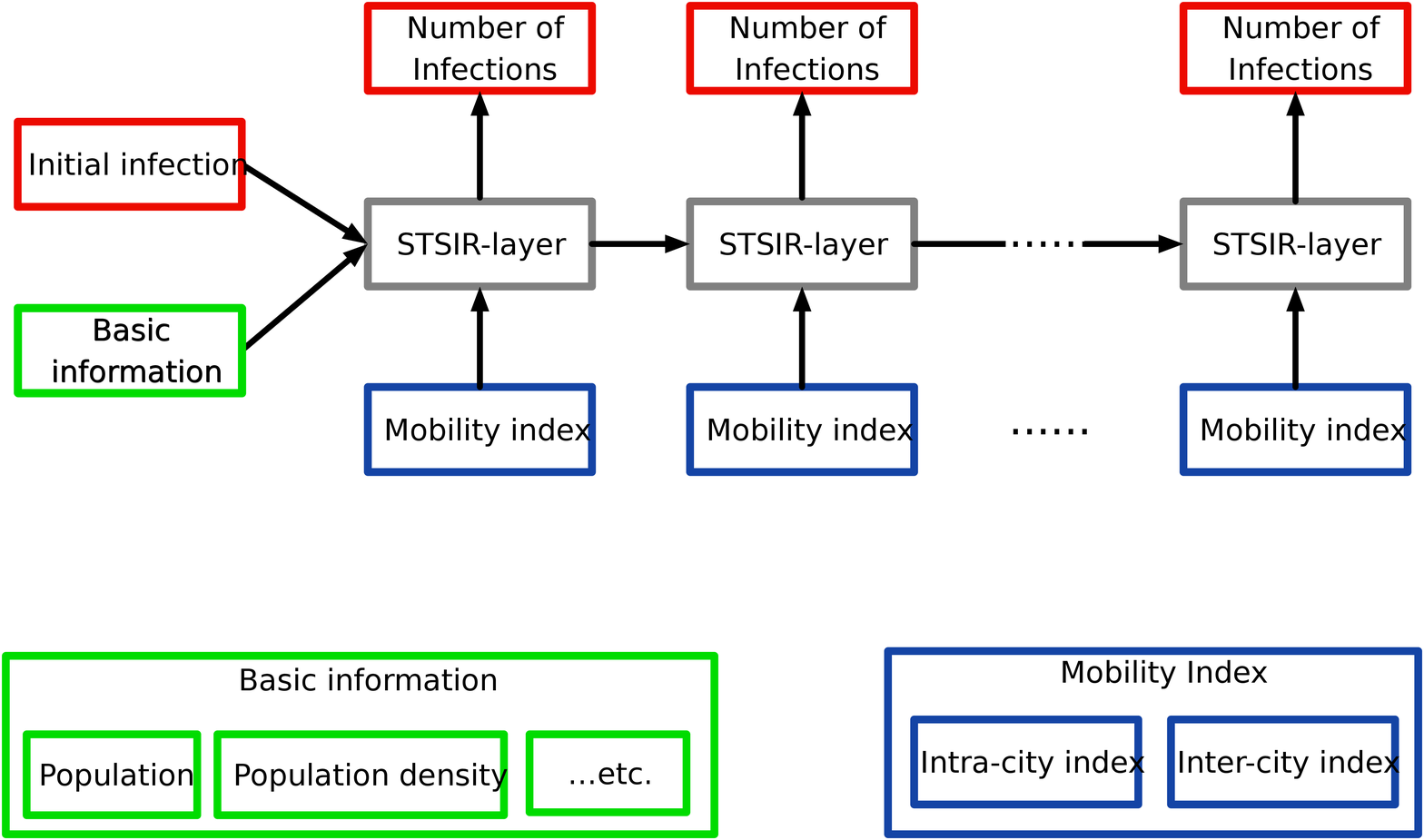}
  \caption{Whole framework of the STSIR system}
  \Description{STSIR framework}
  \label{framework}
\end{figure}
\textbf{STSIR Framework}\quad
As shown in figure \ref{framework}, the whole framework is consist of several \textbf{STSIR Layer}s. Each layer models the daily infection and transmission procedure of the pandemic. With an input of the initial infection and necessary information, the framework can give out the simulated infection numbers based on the corresponding mobility index.\\
\textbf{Mobility Index}\quad 
To better describe the pandemic's transmission character, We adopt two kinds of mobility index in the model. \textbf{Intra-city} index and \textbf{Inter-city} index. The intra-city index describes the activity density within the city. The inter-city index describes the movement between regions. As shown in figure \ref{fig:intercity}, since the government shut down Wuhan, the origin of the outbreak, the inter-city mobility originated from Wuhan to other big cities in China dropped from a high value to near 0. However, there is still transmission between other big cities, as we can see from figure \ref{fig:intercity-all}, which means that there are possibilities that those who get infected can move to multiple big cities and seed the transmission of the pandemic there.\\
Furthermore, as we can see from figure \ref{fig:intracity}, the intra-city index has dropped dramatically when quarantine and stay-home policy begins. However, there are still low-level activities within the city, which give chances of the infection. So both inter-city and intra-city index should be included in our model.\\
\textbf{STSIR Layer}\quad
After considering the problems list above we propose the \textbf{STSIR Layer} which contains a dynamic model shown in follow equations:
\begin{numcases}{}
\begin{split}
\frac{\partial \boldsymbol{S(t)}}{\partial t}=&-\delta_1\cdot\mathbb T_I(t)\frac{\boldsymbol{I(t)\odot S(t)}}{\boldsymbol{N}}\\
&+\delta_2\cdot\boldsymbol{\mathbb T_O (t)}\frac{\boldsymbol{S(t)}}{\boldsymbol{N}}-\delta_2\cdot( \boldsymbol{\mathbb T_O (t)}^T\vec{1})\odot \frac{\boldsymbol{S(t)}}{\boldsymbol{N}}
\end{split}
\label{STSIR system s}
\\
	\begin{split}
	\frac{\partial \boldsymbol{I(t)}}{\partial t}=& \delta_1\cdot\mathbb T_I(t)\frac{\boldsymbol{I(t)\odot S(t)}}{\boldsymbol{N}}-
	\gamma_{1}\boldsymbol {I(t)}\\
	&+\delta_2\cdot\boldsymbol{\mathbb T_O (t)}\frac{\boldsymbol{I(t)}}{\boldsymbol{N}}-\delta_2\cdot( \boldsymbol{\mathbb T_O (t)}^T\vec{1})\odot \frac{\boldsymbol{I(t)}}{\boldsymbol{N}}
	\end{split}
\\
	\frac{\partial \boldsymbol{R(t)}}{\partial t}=\gamma_{1} \boldsymbol{I(t)}
\end{numcases}
with initial condition:
\begin{numcases}{}
	\boldsymbol{S(0)}=\boldsymbol{N}-\boldsymbol{I_0}\\
	\boldsymbol{I(0)}=\boldsymbol{I_0}\\
	\boldsymbol{R(0)}=\boldsymbol{0}
	\label{STSIR system e}
\end{numcases}
Given $k$ cities in the system, here $\boldsymbol{S(t)}, \boldsymbol{I(t)}, \boldsymbol{R(t)}$ are the system susceptible, infectious and recovered vectors with the shape of $[k,1]$ at time $t$ respectively. $\boldsymbol{N}$ denotes the population vector of each city with the shape of $[k,1]$. $\delta_1$ represents the city-wise multiplier for the contribution of intra-city activity to the pandemic transmission. $\delta_2$ is a unified multiplier to denote the transmission between multiple cities. $\gamma_1$ denotes the removing coefficient, which shows the removing rate of the infectious people. $\mathbb T _I(t)$ and $\mathbb T _O(t)$ are Intra-city mobility index with shape the of $[k,1]$ at time $t$ and Inter-city mobility index with the shape of $[k,k]$ at time $t$ respectively. Detailed meanings can be checked from table \ref{tab1}. \\

\begin{table*}
\caption{Notations}
\begin{tabular}{cc}
\toprule
Notations & Meanings\\
\midrule
$N$& Total number of people\\
$\mathbb T _I(t)$& Intra-city mobility index at time $t$ \\
$\mathbb T _O(t)$& Inter-city mobility index at time $t$ \\
$S(t)$& Susceptible people at time $t$ \\
$I(t)$& Infectious people at time $t$ \\
$R(t)$& Recovered people at time $t$ \\
$\gamma_1$& Removing coefficient denoting the removing rate of infected people\\
$\delta_1$ &  Multiplier for intra-city mobility index \\
$\delta_2$& Multiplier for inter-city mobility index \\
\bottomrule
\end{tabular}
\label{tab1}
\end{table*}
Now we show the derivation of such system models. For each city $i$, we consider the transition procedure for each phase, respectively. For the susceptible phase, the change of the number can be consist of three parts:
\begin{itemize}
  \item The people who were susceptible get infected within the city
  \item The people who are susceptible coming from other cities
  \item The people who are susceptible moving to other cities
\end{itemize}
 With the same idea, we can know the four parts of changes of the infectious number: 
 \begin{itemize}
  \item The people who were susceptible get infected within the city
  \item The people who are infectious coming from other cities
  \item The people who are infectious going to other cities
  \item The people who are quarantined, dead, or recovered thus removed from the system.
\end{itemize}
Then for each city $i$, we can have a single dynamic system as follows:
\begin{numcases}{}
\begin{split}
\frac{\partial S(t)_i}{\partial t}=&-\delta_1\cdot\mathbb T_I(t)_i\frac{I(t)_iS(t)_i}{N_i}\\
&+\delta_2S(t)_i\sum_{j=1}^k\mathbb T_O (t)_{ij}-\delta_2S(t)_i\sum_{j=1}^k\mathbb T_O (t)_{ji}
\end{split}
\label{single s}
\\
	\begin{split}
	\frac{\partial I(t)_i}{\partial t}=&\delta_1\cdot\mathbb T_I(t)_i\frac{I(t)_iS(t)_i}{N_i}-\gamma_{1}I(t)_i\\
	&+\delta_2I(t)_i\sum_{j=1}^k\mathbb T_O (t)_{ij}-\delta_2I(t)_i\sum_{j=1}^k\mathbb T_O (t)_{ji}
	\end{split}
\label{single i}
\\
	\frac{\partial{R(t)_i}}{\partial t}=\gamma_{1}I(t)_i
\label{single r}
\end{numcases}
with initial condition
\begin{numcases}{}
	S(0)_i=N_i-I_{0,i}\\
	I(0)_i=I_{0,i}\\
	R(0)_i=0
\end{numcases} 
And we can aggregate $k$ cities with above dynamics together to get the format of the dynamics shown in equation (\ref{STSIR system s}) to (\ref{STSIR system e})
\begin{theorem}
Each STSIR Layer is a \textbf{Self-consistent} system, which means there are no new members introduced to the system. The total population of the whole system always remains the same.
\end{theorem}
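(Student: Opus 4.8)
The assertion is that the total head-count $P(t):=\vec{1}^{T}\bigl(\boldsymbol{S(t)}+\boldsymbol{I(t)}+\boldsymbol{R(t)}\bigr)=\sum_{i=1}^{k}\bigl(S(t)_i+I(t)_i+R(t)_i\bigr)$ never changes and stays pinned at $\vec{1}^{T}\boldsymbol{N}=\sum_i N_i$. The plan is to prove $\tfrac{dP}{dt}\equiv 0$ straight from the layer dynamics (\ref{STSIR system s})--(\ref{STSIR system e}) and then integrate, so that $P(t)=P(0)$ is fixed by the initial condition. Throughout I assume $N_i>0$ so the Hadamard divisions by $\boldsymbol{N}$ are well defined.

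First I would add the three vector equations term by term to get an evolution law for $\boldsymbol{S}+\boldsymbol{I}+\boldsymbol{R}$. Two groups of terms cancel immediately and \emph{locally}, i.e.\ city by city. The intra-city infection flux $\delta_1\,\mathbb{T}_I(t)\tfrac{\boldsymbol{I\odot S}}{\boldsymbol{N}}$ enters the $\boldsymbol{S}$ and $\boldsymbol{I}$ equations with opposite signs (what the susceptible pool loses, the infectious pool gains), and the removal flux $\gamma_1\boldsymbol{I}$ cancels between the $\boldsymbol{I}$ and $\boldsymbol{R}$ equations. What remains is only the inter-city migration part:
\[
\frac{\partial}{\partial t}\left(\boldsymbol{S}+\boldsymbol{I}+\boldsymbol{R}\right)=\delta_2\,\mathbb{T}_O(t)\,\frac{\boldsymbol{S}+\boldsymbol{I}}{\boldsymbol{N}}-\delta_2\left(\mathbb{T}_O(t)^{T}\vec{1}\right)\odot\frac{\boldsymbol{S}+\boldsymbol{I}}{\boldsymbol{N}}.
\]

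Second, I would left-multiply this identity by $\vec{1}^{T}$ to pass from the per-city balance to the global count $P(t)$. The decisive ingredient is the purely algebraic identity, valid for every vector $\boldsymbol{v}$,
\[
\vec{1}^{T}\left(\mathbb{T}_O(t)\,\boldsymbol{v}\right)=\vec{1}^{T}\left(\left(\mathbb{T}_O(t)^{T}\vec{1}\right)\odot\boldsymbol{v}\right)=\sum_{j}\left(\sum_{i}\mathbb{T}_O(t)_{ij}\right)v_{j},
\]
where both sums collapse to the same column-weighted total. Applying it with $\boldsymbol{v}=\tfrac{\boldsymbol{S}+\boldsymbol{I}}{\boldsymbol{N}}$ shows the inflow and outflow migration terms have equal total, so $\tfrac{dP}{dt}=\vec{1}^{T}\tfrac{\partial}{\partial t}(\boldsymbol{S}+\boldsymbol{I}+\boldsymbol{R})=0$. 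Integrating and evaluating at $t=0$ with $\boldsymbol{S(0)}=\boldsymbol{N}-\boldsymbol{I_0}$, $\boldsymbol{I(0)}=\boldsymbol{I_0}$, $\boldsymbol{R(0)}=\boldsymbol{0}$ gives $P(t)=\vec{1}^{T}\boldsymbol{N}=\sum_i N_i$ for all $t$, which is precisely self-consistency.

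I expect the migration step to be the only one with real content, so I would treat it as the crux. Unlike the infection and removal fluxes, the inter-city terms do \emph{not} cancel city by city: the inflow $\delta_2\,\mathbb{T}_O(t)\tfrac{\boldsymbol{S}+\boldsymbol{I}}{\boldsymbol{N}}$ and the outflow $\delta_2\left(\mathbb{T}_O(t)^{T}\vec{1}\right)\odot\tfrac{\boldsymbol{S}+\boldsymbol{I}}{\boldsymbol{N}}$ genuinely differ at an individual city, because people really do move between regions, so conservation can only hold globally. The matrix identity above is exactly the bookkeeping statement that every person leaving one city is counted as arriving in another; it reconciles the row-action of $\mathbb{T}_O$ with the column-sum weights $\mathbb{T}_O^{T}\vec{1}$. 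This is the place I would verify most carefully, since it is where the Hadamard/matrix structure of the STSIR Layer, rather than any pointwise balance, is what forces the total population to be conserved.
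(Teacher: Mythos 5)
Your proof is correct, but it takes a genuinely different---and in fact sounder---route than the paper's. The paper works with the per-city scalar equations (\ref{single s})--(\ref{single r}): it adds them to get $\triangle N(t)_i=\delta_2\bigl(S(t)_i+I(t)_i\bigr)\bigl(\sum_{j}\mathbb T_O(t)_{ij}-\sum_{j}\mathbb T_O(t)_{ji}\bigr)$ and then sums over $i$, invoking the cancellation $\sum_i\sum_j\mathbb T_O(t)_{ij}=\sum_i\sum_j\mathbb T_O(t)_{ji}$. To make that cancellation work, however, the paper's displayed computation factors $S(t)_i$ and $I(t)_i$ outside the sum $\sum_{i=1}^k$ as if they did not depend on $i$; as literally written, $\sum_i (S_i+I_i)\bigl(\text{row sum}_i-\text{column sum}_i\bigr)$ need not vanish (take $k=2$, $\mathbb T_{O,12}=1$, $\mathbb T_{O,21}=0$, $S_1\neq S_2$), because in the scalar form both migration terms are weighted by the \emph{local} state $S_i$. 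Your argument instead stays with the vectorized dynamics (\ref{STSIR system s})--(\ref{STSIR system e}), in which inflow to city $i$ is weighted by the source states through $\mathbb T_O(t)\frac{\boldsymbol{S}+\boldsymbol{I}}{\boldsymbol{N}}$ while outflow carries the column-sum weights $\mathbb T_O(t)^T\vec 1$; your identity
\begin{equation*}
\vec 1^{\,T}\bigl(\mathbb T_O(t)\,\boldsymbol v\bigr)=\vec 1^{\,T}\bigl(\bigl(\mathbb T_O(t)^T\vec 1\bigr)\odot \boldsymbol v\bigr)=\sum_j\Bigl(\sum_i\mathbb T_O(t)_{ij}\Bigr)v_j
\end{equation*}
is exactly the correctly weighted version of the paper's double-sum cancellation, and it closes the argument without pulling state variables out of the city sum. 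You are also right to flag the migration step as the crux and to note that, unlike the infection flux (which cancels between $\boldsymbol S$ and $\boldsymbol I$) and the removal flux (which cancels between $\boldsymbol I$ and $\boldsymbol R$), the inter-city terms only balance globally. What your approach buys is a proof that actually matches the vector system stated in the theorem; what the paper's per-city bookkeeping buys is intuition, but as executed it silently requires either balanced row/column sums or equal city states, and it incidentally exposes that the scalar form (\ref{single s})--(\ref{single i}) is not the componentwise reading of the vector form (\ref{STSIR system s}), since the vector inflow depends on $S_j/N_j$ of the source cities rather than on $S_i$.
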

\begin{proof}
	For each layer at time $t$ we can add (\ref{single s}), (\ref{single i}) and (\ref{single r}) together. We can get:
	\begin{equation}
	\begin{split}
		\triangle N(t)_i=&\delta_2S(t)_i\sum_{j=1}^k\mathbb T_O (t)_{ij}-\delta_2S(t)_i\sum_{j=1}^k\mathbb T_O (t)_{ji}\\
			&+\delta_2I(t)_i\sum_{j=1}^k\mathbb T_O (t)_{ij}-\delta_2I(t)_i\sum_{j=1}^k\mathbb T_O (t)_{ji}
	\end{split}
	\end{equation}
And for each $\triangle N(t)_i$ we can add them together for k cities. We have:
 	\begin{equation}
	\begin{split}
	\sum_{i=1}^kN(t)_i=&\delta_2S(t)_i\sum_{i=1}^k(\sum_{j=1}^k\mathbb T_O (t)_{ij}-\sum_{j=1}^k\mathbb T_O (t)_{ji})\\
			&+\delta_2I(t)_i\sum_{i=1}^k(\sum_{j=1}^k\mathbb T_O (t)_{ij}-\sum_{j=1}^k\mathbb T_O (t)_{ji})\\
			=&\delta_2S(t)_i(\sum_{i=1}^k\sum_{j=1}^k\mathbb T_O (t)_{ij}-\sum_{i=1}^k\sum_{j=1}^k\mathbb T_O (t)_{ji})\\
			&+\delta_2I(t)_i(\sum_{i=1}^k\sum_{j=1}^k\mathbb T_O (t)_{ij}-\sum_{i=1}^k\sum_{j=1}^k\mathbb T_O (t)_{ji})\\
			=&\delta_2S(t)_i*0+\delta_2I(t)_i*0\\
			=&0
	\end{split}
	\label{proof e}
	\end{equation}
	Based on equation (\ref{proof e}), the inter-city transmission mechanism implement no change of the total population.
\end{proof}	
\textbf{Parameter Learning}\quad
There are several parameters in the STSIR model $\theta= \{\delta_1,\delta_2,\gamma_1 \}$. Because of the asymptomatic state of the COVID-19 and the complexity of the case test, we consider the official announced cases is a delayed expression of the genuine situation. Instead of fitting the officially confirmed curve, we aim to train the model to be able to give a trend of the pandemic by simulating. Given two time sequences, $I_s(t)$ and $I_t(t)$, which denote simulated numbers of daily infections and officially confirmed numbers of daily infections respectively, we want to minimize the gap between total number of cases by simulation and the real confirmed numbers after the test delay, which can be defined as following optimization problem:
\begin{equation}
\underset{\theta}{minimize}\quad Loss(\int_{0}^{t-delay}I_{s}(\theta,t)dt-\int_{0}^{t}I_t(t)dt)
\label{mini}
\end{equation}
To get the parameters of such a multi-system model, we developed a \textbf{Multi-Stage Simulated Annealing} algorithm.  Simulated Annealing \cite{kirkpatrick1983optimization} algorithm has been used for multiple optimization tasks\cite{suman2006survey}, including both single objective and multiple objective optimization problems. It has shown its effectiveness in optimizing parameters. However, traditional simulated annealing is not suitable to optimize a dynamic model with multiple inter-reacting systems. So based on our problem, we use \textbf{Multi-Stage Simulated Annealing} algorithm to find our optimal parameters for the model. The algorithm is described in Algorithm 1.
\begin{algorithm}[]
\label{algorithm}
\DontPrintSemicolon
  
  \Parameters{$\theta= \{\boldsymbol{\delta_1},\delta_2,\boldsymbol{\gamma_1} \}$}
  \KwData{Training set $x$}
  Initializing Parameters..\\
    \While{(current epoch\textless epochs) or (other ending conditions)}
	{
	$loss1=loss(model(x))$\\
	\tcp{Stage 1 update intra-city parameters}
	\For{each $p\in \{\boldsymbol{\delta_1},\boldsymbol{\gamma_1}\}$} 
	{
  	$p_{delta}=uniform(-0.5,0.5)$\\
  	$p=p+p_{delta}*lr_{stage}$ \tcp {randomly update parameters}
 	
  	$loss2=loss(model_{new}(x))$ \tcp {Accept the new parameters if $loss2<loss1$, accept the new parameters with probability if $loss2>loss1$}
	\If {$(loss2>loss1)$}
	{
	\If {$uniform(0,1)>e^{\frac{loss2-loss1}{temp}}$}
		{
  		restore $model.oldparams$\tcp {restore old parameters if update is bad}
		}
	}
	}
	
	\tcp{Stage 2 update inter-city parameter}
	$p=\delta_2$\\
  	$p_{delta}=uniform(-0.5,0.5)$\\
  	$p=p+p_{delta}*lr_{stage}$ \tcp {randomly update parameters}
 	
  	$loss2=loss(model_{new}(x))$ \tcp {Accept the new parameters if $loss2<loss1$, accept the new parameters with probability if $loss2>loss1$}
	\If {$(loss2>loss1)$}
	{
	\If {$uniform(0,1)>e^{\frac{loss2-loss1}{temp}}$}
		{
  		restore $model.oldparams$\tcp {restore old parameters if update is bad}
		}
	}
	
  $temp=temp*coolingrate$\tcp{cooling temperature}
	}

\caption{Multi-Stage Simulated Annealing}
\end{algorithm}

\section{Experiments}
To evaluate the effectiveness of STSIR model, we conduct experiments on with Baidu Mobility dataset \cite{baidu20} and China pandemic dataset provided by Dingxiangyuan \cite{dxy}. We will then provide a brief data description in Section \ref{dataDes}. In Section \ref{setting}, we will analyze the data we have, and illustrate our experiment settings. In Section \ref{predict}, we demonstrate the strong predicting power of our model. In section \ref{policy}, we analyze how great the model can help us analyze the effect of different policies in pandemic transmission.

\subsection{Data Description}
\label{dataDes}

\textbf{Baidu Mobility Dataset}\quad
Baidu Mobility dataset \cite{baidu20} is provided by the Baidu company. Baidu search engine, as well as its various applications, owns over 1 billion users in China. Every time an application is launched, Baidu collects the IP address of the end device. Thus, Baidu could know how the crowd moves from one place to another based on the changes of the IP address. The information is represented as a mobility index, of which a higher number means a more frequent movement of the crowd. The data is fully anonymized and collected according to the terms and conditions of Baidu for privacy concerns.\\
The dataset contains two kinds of mobility data of 388 administrative areas. The administrative areas are four municipalities, 343 prefecture-level cities, and 41 province directly governed counties (including Hong Kong, Macau, and Taiwan). The two kinds of mobility data, intra-city index and inter-city index, are two float numbers representing the activity strength within the city and between cities per day, respectively. The record ranges from 2020-01-01 to 2020-03-15. Since the range covers the Chinese New Year, the dataset also provides a historical intra-city index of the same period based on Lunar Calendar for reference. Table 1 summarizes the statistics of the dataset.

\newcommand{\tabincell}[2]{\begin{tabular}{@{}#1@{}}#2\end{tabular}}
\begin{table}
  \caption{Dataset Description of Baidu Mobility Dataset}
  \label{tab:baidu_mobil}
  \begin{tabular}{ccl}
    \toprule
    Name&Value\\
    \midrule
    \# Administrative areas & 388\\
    Internal flow time range & 2020-01-01 ~ 2020-03-15\\
    \# Internal flow Records & 75\\
    Migration flow time range & \tabincell{c}{2019-01-12 ~ 2019-04-12 \\  2020-01-01 ~ 2020-03-15}\\
    \# Migration flow Records & 128816 \\
  \bottomrule
\end{tabular}
\end{table}

\begin{figure}[]
  \centering
  \includegraphics[width=\linewidth]{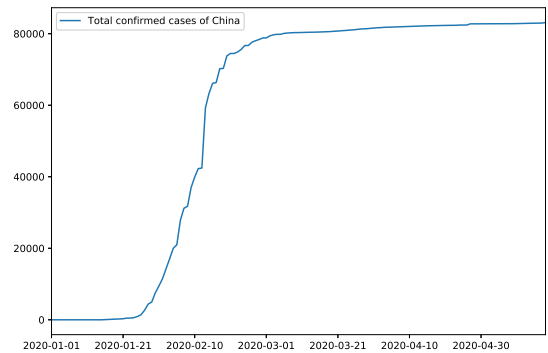}
  \caption{The official confirmed cases of COVID-19 in China.}
  \Description{All china}
  \label{China all}
\end{figure}

\textbf{DingXiangYuan COVID-19 Dataset}\quad
Dingxiangyuan\cite{dxy} COVID-19 dataset contains COVID-19/2019-nCov time series infection data in China. The data source is Ding Xiang Yuan. Ding Xiang Yuan is a professional medical website in China, and its data is authorized and used by mainstream media in China, such as CCTV and China Daily.
The dataset contains the overall infected number in 388 administrative areas, including four municipalities, 343 prefecture-level cities, and 42 province directly governed counties. The time ranges from 2020-01-01 to 2020-05-19. The data, which is an integer number, presents the overall infected number per day per administrative area. Table 2 summarizes the statistics of the dataset.

\begin{table}
  \caption{Dataset Description of DXY COVID-19}
  \label{tab:baidu_DXY}
  \begin{tabular}{ccl}
    \toprule
    Name&Value\\
    \midrule
    \# Administrative areas & 388\\
    Internal flow time range & 2020-01-01 ~ 2020-05-19\\
    \# Internal flow Records & 139\\
    Average Increase per day nationwide & 597.95\\
  \bottomrule
  
\end{tabular}
\end{table}

\subsection{Experiments settings}\label{setting}
As China has conducted a strict policy of distancing and quarantine, the whole pandemic in China become relatively stable (Figure \ref{China all}) during late February, which enables us to test our model by training our model at an early stage, and then feed it with the mobility index to give the simulated final scale of the pandemic. We will calculate the MAE (Mean Absolute Error) with the genuine final scale of the pandemic by:
\begin{equation}
	MAE = (\frac{1}{n})\sum_{i=1}^{k}\left | y_{i} - x_{i} \right |
\end{equation}

\begin{figure*}[htbp]
  \centering
  \includegraphics[width=\textwidth]{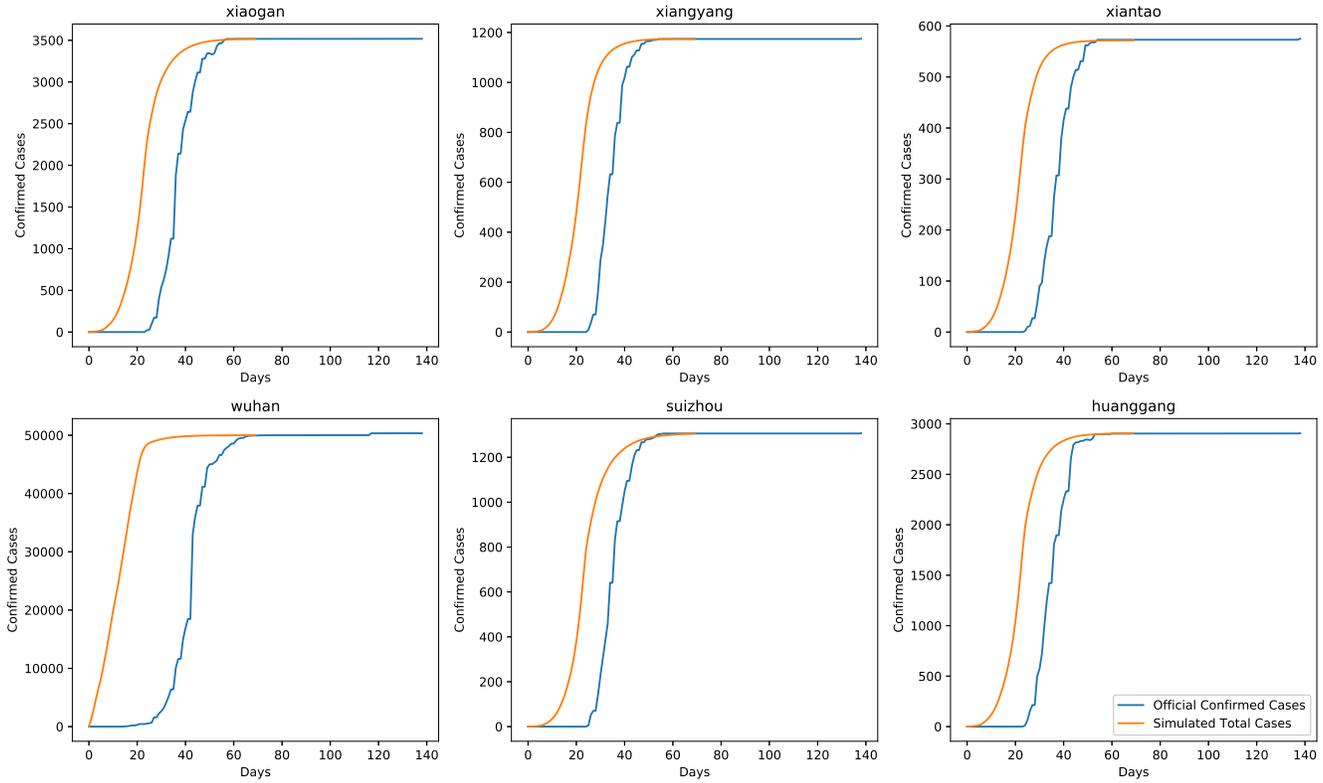}
  \caption{Prediciton results of the STSIR model in major cities in Hubei Province.}
  \Description{Prediction index}
  \label{fig:finalall}
\end{figure*}

\begin{table}
  \caption{Prediction Accuracy}
  \label{tab:accuracy}
  \begin{tabular}{ccl}
    \toprule
    Framework &Overall MAE\\
    \midrule
    STSIR & 7.76\\
  \bottomrule
  
\end{tabular}
\end{table}

\subsection{Prediction Performance}\label{predict}
In this section, we demonstrate how well our model gives the final predictions of the real data. Moreover, the experiments show the model performs well in giving predictions of the total confirmed cases. As shown in the optimization problem (\ref{mini}), we want the model can give the current transmission situation based on the concurrent confirmed cases are given. Because of the asymptomatic character of the COVID-19, there always a delay between the tested confirmed cases and the real infected cases. During the experiment, we set the outbreak that has been passed for 50 days, and the delay between the test and real infection is 20 days. We then train the model at this stage and try to give the total confirmed case at the 70 days, based on real intra-city activity and inter-city activity. As shown in \ref{tab:accuracy}, the STSIR model achieves an overall MAE of 7.76 in predicting the final scale of the pandemic, which denotes its prediction power. As shown in \ref{fig:finalall}, we can see the gap between the real confirmed cases and the simulated curve. It shows the delay effect of the case test. When quarantine and distancing policy is conducted, the transmission tends to be stopped, and the model can predict the final scale of the pandemic well.
\subsection{Policy Analysis}\label{policy}
In this section, we conduct experiments on what if there is no "quarantine" and "shut down" policies are made how severe the pandemic will go in China. Since the transportation system will be really activated during the Chinese New Year, the pandemic will go really mad if no such policies are made. We insert the mobility index last year in the model to see what will happen in Huber Province. As shown in Figure \ref{fig:finalallhis}, if we do not shut down Wuhan, there will be massive infected populations moving to cities near Wuhan and causing massive infections.
\begin{figure*}[b]
  \centering
  \includegraphics[width=\textwidth]{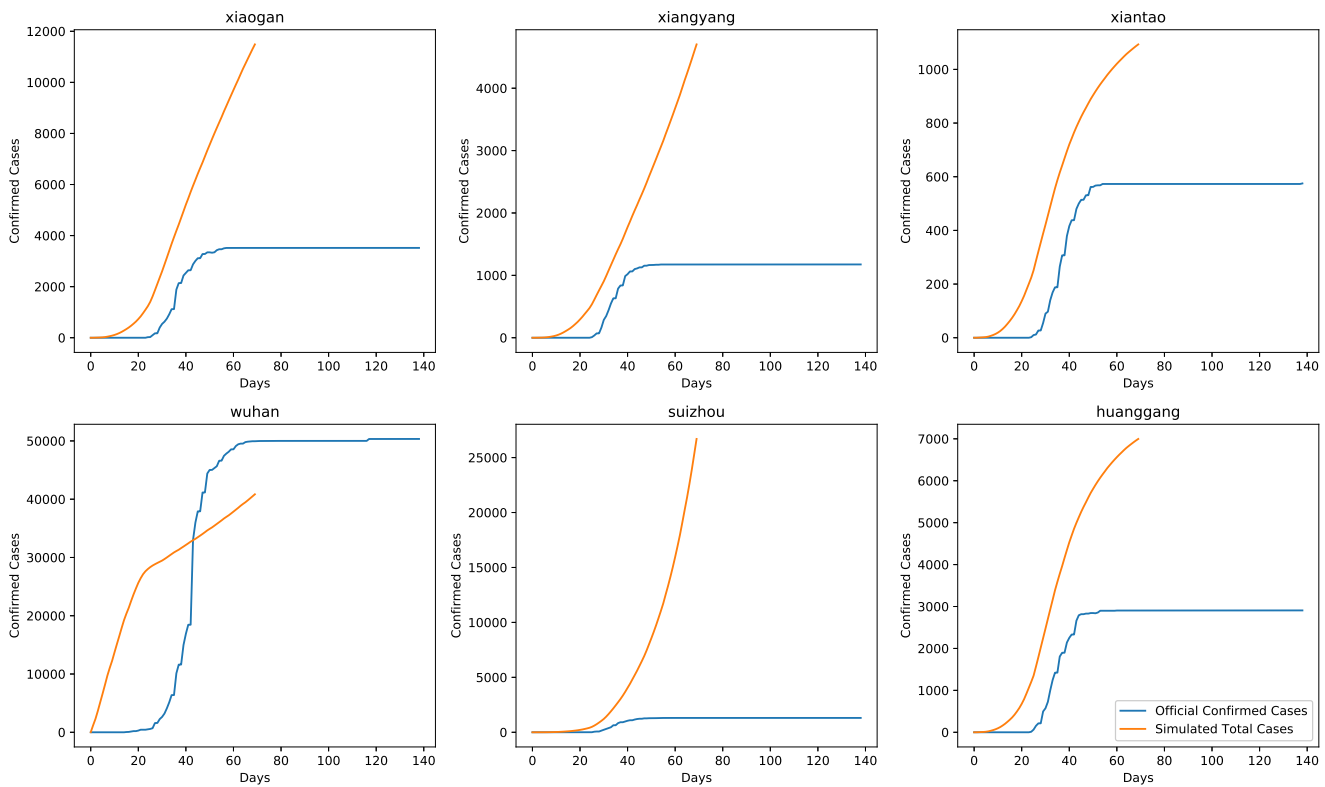}
  \caption{Policy evaluating results of the STSIR model in major cities in Hubei Province.}
  \Description{Evaluate index}
  \label{fig:finalallhis}
\end{figure*}

\section{Conclusion}
To tackle the spread of COVID-19, we introduce the STSIR model to quantify the scale of the pandemic and help to analyze the effect of the policies by introducing different mobility indexes. The system is designed to be self-consistent, and a new MSSA algorithm is introduced to learn the model parameters. Experimental results verify the effectiveness of our study. It shows outstanding predicting power and excellent analysis capability.

%%
%% The next two lines define the bibliography style to be used, and
%% the bibliography file.
\bibliographystyle{ACM-Reference-Format}
\bibliography{bibliography}

%%
%% If your work has an appendix, this is the place to put it.
\appendix

\end{document}